\newcommand{\Oh}[1]
	{\ensuremath{\mathcal{O}\!\left({#1}\right)}}
\newcommand{\access}
	{\ensuremath{\mathsf{access}}}
\newcommand{\rank}
	{\ensuremath{\mathsf{rank}}}
\newcommand{\select}
	{\ensuremath{\mathsf{select}}}
\newcommand{\occ}
	{\ensuremath{\mathsf{occ}}}
\newcommand{\BWT}
	{\ensuremath{\mathsf{BWT}}}
\newcommand{\C}
	{\ensuremath{\mathsf{C}}}
\newcommand{\LF}
	{\ensuremath{\mathsf{LF}}}
\newcommand{\Psiop}
	{\ensuremath{\mathsf{\Psi}}}
\newcommand{\mus}[1]
	{\SI{#1}{\micro\second}}
\begin{document}

\title{Relative Select}
\author{Christina Boucher\inst{1} \and
Alexander Bowe\inst{2} \and
Travis Gagie\inst{3} \and\\
Giovanni Manzini\inst{4} \and
Jouni Sir\'en\inst{5}}
\authorrunning{Boucher et al.}
\institute{University of Colorado, USA \and
National Institute of Informatics, Japan \and
University of Helsinki, Finland \and
University of Eastern Piedmont, Italy \and
Wellcome Trust Sanger Institute, UK}
\maketitle

\begin{abstract}
Motivated by the problem of storing coloured de Bruijn graphs, we show how, if we can already support fast select queries on one string, then we can store a little extra information and support fairly fast select queries on a similar string.
\end{abstract}

\section{Introduction}
\label{sec:introduction}

Many compressed data structures for strings rely on three fundamental queries: access, rank and select.  The query \(S.\access (i)\) on a string $S$ returns its $i$th character; the query \(S.\rank_a (i)\) returns the number of occurrences of character $a$ in the prefix of $S$ of length $i$; and the query \(S.\select_a (j)\) returns the position of the $j$th leftmost occurrence of $a$ in $S$.  Suppose we have a data structure supporting these queries on a string $S_1$ and we want another data structure supporting them on a similar string $S_2$.  It is not difficult to store $\Oh{d}$ extra words, where $d$ is the standard edit distance between $S_1$ and $S_2$ (i.e., the number of single-character insertions, deletions and substitutions needed to change one into the other), and support access to any character of $S_2$ using $\Oh{\log \log (|S_1| + |S_2|)}$ time on top of an access query on $S_1$.  Last year, when describing their relative FM-index data structure, Belazzougui et al.~\cite{BGGMS14} showed how to store $\Oh{d}$ extra words and support any rank query on $S_2$ using $\Oh{\log \log (|S_1| + |S_2|)}$ time on top of a rank query on $S_1$.  In this paper we show how to store $\Oh{d}$ extra words and support any select query on $S_2$ using $\Oh{\log \log (|S_1| + |S_2|)}$ time on top of a select query on $S_1$.  We call this {\em relative select} and we expect it to be useful when storing compressed data structures for navigating in coloured de Bruijn graphs~\cite{ICTFM12}.

Belazzougui el al.\ were interested in saving space when storing FM-indexes~\cite{FM05} for many genomes from the same species.  An FM-index for a genome is essentially just a data structure supporting access and rank on the Burrows-Wheeler Transform~\cite{BW94} (BWT) of that genome.  The BWT sorts the characters of a string into the lexicographic order of the suffixes that immediately follow them.  The edit distance between two genomes from the same species tends to be small relative to their lengths and in practice the edit distance between their BWTs also tends to be small.  Therefore, if we store the FM-index for one genome normally, we can use Belazzougui et al.'s result to save space when storing FM-indexes for other genomes from the same species (at the cost of increasing their query times).

It is possible to support nearly all the functionality of an FM-index without using select queries on the underlying BWT, so Belazzougui et al.\ did not consider relative select.  Adding it to their data structure allows us, e.g., to extract more quickly the characters following occurrences of a pattern.  Our interest in relative select, however, comes from Bowe et al.'s~\cite{BOSS12} (see also~\cite{BBGPS15}) compressed representation of de Bruijn graphs --- which is based on something like an FM-index and uses select queries to find nodes' predecessors, and which we call the BOSS representation for the authors' initials --- and the possibility of extending it to coloured de Bruijn graphs.  Our plan for future work is to view a coloured de Bruijn graph as a union of normal de Bruijn graphs, and relatively compress the BOSS representations of those graphs.  Due to space constraints, we provide a brief summary of the BOSS representation and coloured de Bruijn graphs as an appendix.  In Section~\ref{sec:design} we describe how we implement relative select, and in Section~\ref{sec:experiments} we give experimental evidence that our implementation is practical.  For simplicity, we assume throughout that the size of the alphabet is constant, and we work in the word-RAM model with \(\Omega (\log (|S_1| + |S_2|))\)-bit words.

\section{Design}
\label{sec:design}

Although our implementation of relative select is made up of steps that are individually very simple, the overall effect might be confusing.  To mitigate this, we break our presentation into pieces: first, we consider the case when $S_2$ is a subsequence of $S_1$; then, we consider the case when $S_2$ is a supersequence of $S_1$; and finally, we combine our solutions for these special cases to obtain a general solution.  We close this section with a small example.

\begin{lemma}
\label{lem:subsequence}
Given a select data structure for a string $S_1$, and a subsequence $S_2$ of $S_1$, we can store $\Oh{|S_1| - |S_2|}$ extra words and support any select query on $S_2$ using $\Oh{\log \log |S_1|}$ time on top of a select query on $S_1$.
\end{lemma}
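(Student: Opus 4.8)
The plan is to view $S_2$ as the result of deleting from $S_1$ a set $D$ of $d = |S_1| - |S_2|$ positions, and to answer a $\select$ query on $S_2$ with a single $\select$ query on $S_1$ flanked by two cheap index translations. The key observation is that the $j$th occurrence of a symbol $a$ in $S_2$ is exactly the $j$th of those occurrences of $a$ in $S_1$ that survive the deletion. I would therefore answer $S_2.\select_a(j)$ in three steps: first convert $j$, the rank of the target among the \emph{surviving} occurrences of $a$, into the rank $k$ of the same character among \emph{all} occurrences of $a$ in $S_1$; then obtain its position $p = S_1.\select_a(k)$ in $S_1$ using the given data structure; and finally map $p$ to its position in $S_2$ by subtracting the number of deleted positions that precede it.

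For the last translation I would store the $d$ elements of $D$ in a predecessor data structure (for example a $y$-fast trie), so that the number of deleted positions below $p$, and hence the image of $p$ in $S_2$, can be found in \Oh{\log\log|S_1|} time. For the first translation I would, for each alphabet symbol $a$, record the set $\Delta_a$ of those occurrences of $a$ in $S_1$ whose positions lie in $D$, identifying each such occurrence by its rank among the occurrences of $a$ in $S_1$. Writing $\Delta_a = \{x_1 < \cdots < x_m\}$, the occurrence with rank $k$ among all occurrences of $a$, if it survives, has rank $k - |\{x \in \Delta_a : x < k\}|$ among the survivors; inverting this relation, the rank we seek is $k = j + i$, where $i$ is the number of deleted occurrences of $a$ that precede the target. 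Since the quantities $x_i - i$ are non-decreasing, $i$ is recovered from $j$ by a single predecessor query, again stored in a $y$-fast trie and answered in \Oh{\log\log|S_1|} time. The three steps together cost one $\select$ query on $S_1$ plus \Oh{\log\log|S_1|} time.

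The step I expect to be the real obstacle is the first translation, and specifically keeping its space within \Oh{d} words rather than the \Oh{|S_1|} bits that a naive bitmap over all occurrences of each symbol would use. The resolution is to index only the deleted occurrences: the sets $\Delta_a$ are disjoint and their sizes sum to exactly $d$, since each deleted position contributes to precisely one of them, so the whole collection of predecessor structures occupies \Oh{d} words. The remaining routine checks are that the $x_i - i$ form a non-decreasing sequence, which is immediate from $x_{i+1} > x_i$, so that the predecessor query is well defined, and that the arithmetic in the third step uses a strict inequality because the target position of $S_1$ is itself retained in $S_2$.
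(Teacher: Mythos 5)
Your proposal is correct and is essentially the paper's own proof: your three steps correspond exactly to the paper's formula \(S_2.\select_x (i) = B.\rank_0 (S_1.\select_x (B_x.\select_0 (i)))\), with your per-character predecessor structures over the deleted occurrence ranks implementing \(B_x.\select_0\) and your $y$-fast trie over the deleted positions implementing \(B.\rank_0\). The only difference is one of packaging --- the paper abstracts both translations as sparsely stored bitvectors supporting rank and select in \Oh{\log \log |S_1|} time, which you have simply unrolled into explicit predecessor queries.
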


\begin{proof}
We store a bitvector \(B [1..|S_1|]\) with 1s marking the characters of $S_1$ that do not appear in $S_2$.  For each distinct character $x$, we store a bitvector \(B_x [1..\occ (x, S_1)]\), where \(\occ (x, S_1)\) is the number of occurrences of $x$ in $S_1$, with 1s marking the occurrences of $x$ in $S_1$ that do not appear in $S_2$.  This takes a total of $\Oh{|S_1| - |S_2|}$ extra words and lets us compute
\[S_2.\select_x (i) = B.\rank_0 (S_1.\select_x (B_x.\select_0 (i)))\]
using $\Oh{\log \log |S_1|}$ time on top of a select query on $S_1$.  To see why this equality holds, consider that \(B_x.\select_0 (i)\) returns the rank in $S_1$ of the $i$th $x$ that appears in $S_2$; \(S_1.\select_x (B_x.\select_0 (i))\) returns the position of that $x$ in $S_1$; and \(B.\rank_0 (S_1.\select_x (B_x.\select_0 (i)))\) returns the position of that $x$ in $S_2$.
\qed
\end{proof}

\begin{lemma}
\label{lem:supersequence}
Given a select data structure for a string $S_1$, and a supersequence $S_2$ of $S_1$, we can store $\Oh{|S_2| - |S_1|}$ extra words and support any select query on $S_2$ using $\Oh{\log \log |S_2|}$ time on top of a select query on $S_1$.
\end{lemma}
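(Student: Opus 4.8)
The plan is to exploit the symmetry with Lemma~\ref{lem:subsequence}: since $S_2$ is a supersequence of $S_1$, the string $S_1$ is a subsequence of $S_2$, and exactly $|S_2| - |S_1|$ characters of $S_2$ are \emph{new}, in the sense that they have no counterpart in $S_1$. As before, I would store a bitvector $B[1..|S_2|]$ with 1s marking the new characters, and for each distinct character $x$ a bitvector $B_x[1..\occ(x, S_2)]$ with 1s marking the occurrences of $x$ in $S_2$ that are new. Since the total number of 1s across $B$ and across all the $B_x$ is $\Oh{|S_2| - |S_1|}$, these sparse bitvectors fit in $\Oh{|S_2| - |S_1|}$ words while still supporting $\rank$ and $\select$ in $\Oh{\log \log |S_2|}$ time.

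To answer $S_2.\select_x (i)$ I would first inspect $B_x$ at position $i$. If the $i$th $x$ in $S_2$ is inherited from $S_1$ (that is, $B_x.\access (i) = 0$), then I can run the subsequence argument in reverse: $B_x.\rank_0 (i)$ gives the rank $k$ of this $x$ among the $x$'s inherited from $S_1$, $S_1.\select_x (k)$ gives its position in $S_1$, and applying $B.\select_0$ to that position maps it to its position in $S_2$, yielding
\[
S_2.\select_x (i) = B.\select_0 \bigl( S_1.\select_x (B_x.\rank_0 (i)) \bigr).
\]

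The main obstacle is the complementary case, when the $i$th $x$ is new ($B_x.\access (i) = 1$): here there is no corresponding occurrence in $S_1$, so no select query on $S_1$ can locate it, and its position must be recorded explicitly. I would therefore store, for each $x$, an array $P_x$ holding the positions in $S_2$ of the new occurrences of $x$, so that the answer is simply $P_x [B_x.\rank_1 (i)]$. Because there are exactly $|S_2| - |S_1|$ new characters in total, these arrays together occupy $\Oh{|S_2| - |S_1|}$ words, and the lookup costs $\Oh{1}$ time and uses no $S_1$ query at all. Combining the two cases, every $S_2.\select_x$ query is answered with at most one $S_1.\select_x$ query plus $\Oh{\log \log |S_2|}$ additional time, as required. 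The one point I would check carefully is that the representations chosen for $B$ and the $B_x$ genuinely achieve both the claimed space and the claimed $\rank$/$\select$ times on such sparse inputs.
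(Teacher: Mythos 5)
Your proof is correct, and its skeleton --- the bitvectors $B$ and $B_x$, the case split on $B_x[i]$, and the formula $B.\select_0 (S_1.\select_x (B_x.\rank_0 (i)))$ for inherited occurrences --- is exactly the paper's. The one place you diverge is the new-occurrence case: the paper stores a select data structure for the subsequence $D$ of $S_2$ formed by the marked characters and answers with $B.\select_1 (D.\select_x (B_x.\rank_1 (i)))$, whereas you store explicit position arrays $P_x$ and answer with $P_x [B_x.\rank_1 (i)]$. Both cost $\Oh{|S_2| - |S_1|}$ words, so your variant is equally valid, and on that branch it is slightly simpler and faster (an $\Oh{1}$ array lookup in place of a select on $D$ followed by $B.\select_1$). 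What the paper's formulation buys is reuse: as it notes in the example following Theorem~\ref{thm:main}, a relative FM-index for $S_2$ with respect to $S_1$ already contains $B$ and $D$, so routing the answer through $D$ adds almost nothing to an existing index, while your $P_x$ arrays would be new (if modest) data. Your closing caveat is also the right thing to check, and it applies to the paper's proof just as much as to yours: to fit $B$ (length $|S_2|$ but with only $\Oh{|S_2| - |S_1|}$ ones) into $\Oh{|S_2| - |S_1|}$ words one needs a sparse representation, on which $\rank_0$ and $\select_0$ cost $\Oh{\log \log |S_2|}$ rather than $\Oh{1}$ --- which is precisely where the lemma's $\Oh{\log \log |S_2|}$ overhead comes from.
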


\begin{proof}
We store a bitvector \(B [1..|S_2|]\) with 1s marking the characters of $S_2$ that do not appear in $S_1$, and a select data structure for the subsequence $D$ of $S_2$ consisting of those marked characters.  For each distinct character $x$, we store a bitvector \(B_x [1..\occ (x, S_2)]\) with 1s marking the occurrences of $x$ in $S_2$ that do not appear in $S_1$.  This takes a total of $\Oh{|S_2| - |S_1|}$ extra words and lets us compute
\[S_2.\select_x (i) =
\left\{ \begin{array}{l@{\hspace{3ex}}l}
B.\select_0 (S_1.\select_x (B_x.\rank_0 (i))) & \mbox{if \(B_x [i] = 0\),}\\
B.\select_1 (D.\select_x (B_x.\rank_1 (i))) & \mbox{if \(B_x [i] = 1\).}
\end{array} \right.\]
using $\Oh{\log \log |S_2|}$ time on top of a select query on $S_1$.  To see why this equality holds, suppose the $i$th $x$ in $S_2$ also appears in $S_1$, so \(B_x [i] = 0\).  Consider that \(B_x.\rank_0 (i)\) returns the rank of that $x$ in $S_1$; \(S_1.\select_x (B_x.\rank_0 (i))\) returns the position of that $x$ in $S_1$; and \(B.\select_0 (S_1.\select_x (B_x.\rank_0 (i)))\) returns the position of that $x$ in $S_2$.  Now suppose the $i$th $x$ in $S_2$ does not appear in $S_1$, so \(B_x [i] = 1\).  Consider that \(B_x.\rank_1 (i)\) returns the rank of that $x$ in $D$; \(D.\select_x (B_x.\rank_1 (i))\) returns the position of that $x$ in $D$; and \(B.\select_1 (D.\select_x (B_x.\rank_1 (i)))\) returns the position of that $x$ in $S_2$.
\qed
\end{proof}

\begin{theorem}
\label{thm:main}
Given a select data structure for a string $S_1$, and another string $S_2$, we can store $\Oh{d}$ extra words, where $d$ is the edit distance between $S_1$ and $S_2$, and support any select query on $S_2$ using $\Oh{\log \log (|S_1| + |S_2|)}$ time on top of a select query on $S_1$.
\end{theorem}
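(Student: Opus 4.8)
The plan is to combine Lemmas~\ref{lem:subsequence} and~\ref{lem:supersequence} by introducing an intermediate string that is simultaneously a subsequence of $S_1$ and a subsequence of $S_2$. Specifically, I would let $S_0$ be a longest common subsequence of $S_1$ and $S_2$; then $S_0$ is a subsequence of $S_1$ and $S_2$ is a supersequence of $S_0$. The idea is to route a select query on $S_2$ down through $S_0$ to $S_1$: first use the supersequence construction (Lemma~\ref{lem:supersequence}) to reduce a select query on $S_2$ to a select query on $S_0$, and then use the subsequence construction (Lemma~\ref{lem:subsequence}) to reduce a select query on $S_0$ to a select query on $S_1$.

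The key steps, in order, are as follows. First I would fix $S_0$ as above and observe that, since $S_2$ is obtained from $S_1$ by $d$ edit operations, the lengths satisfy $|S_1| - |S_0| = \Oh{d}$ and $|S_2| - |S_0| = \Oh{d}$; each insertion, deletion, or substitution changes each of these gaps by only a constant. Second, I would apply Lemma~\ref{lem:subsequence} with the roles $S_1$ and $S_0$ (that is, treating $S_0$ as a subsequence of $S_1$) to obtain a select data structure for $S_0$ using $\Oh{|S_1| - |S_0|} = \Oh{d}$ extra words, answering each query in $\Oh{\log\log |S_1|}$ time on top of one select query on $S_1$. Third, I would apply Lemma~\ref{lem:supersequence} with the roles $S_0$ and $S_2$ (treating $S_2$ as a supersequence of $S_0$), using $\Oh{|S_2| - |S_0|} = \Oh{d}$ extra words, to answer each select query on $S_2$ in $\Oh{\log\log |S_2|}$ time on top of one select query on the $S_0$ structure. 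Composing the two reductions, a single select query on $S_2$ triggers a single select query on $S_0$, which in turn triggers a single select query on $S_1$; the extra time is $\Oh{\log\log|S_1| + \log\log|S_2|} = \Oh{\log\log(|S_1| + |S_2|)}$ and the extra space is $\Oh{d}$.

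The only subtlety I anticipate concerns the internal select structure that Lemma~\ref{lem:supersequence} requires for the string $D$ of inserted characters. That lemma assumes it has a select data structure for $D$ at its disposal; here $D$ has length $\Oh{d}$, so I would simply build a standard select data structure for it outright, which fits within the $\Oh{d}$-word budget and answers its own queries without recursing back to $S_1$. With this in hand, the main obstacle is merely bookkeeping: verifying that the two reductions compose cleanly, that the select query on $S_0$ demanded by Lemma~\ref{lem:supersequence} is exactly the one provided by Lemma~\ref{lem:subsequence}, and that the total overhead remains a single select query on $S_1$ rather than several. I do not expect any genuinely hard step, since both ingredients are already established and the length bounds follow directly from the definition of edit distance.
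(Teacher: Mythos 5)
Your proposal is correct and matches the paper's proof essentially verbatim: the paper likewise fixes a common subsequence $C$ of $S_1$ and $S_2$ (the characters untouched by the $d$ edits, or a longer one such as the LCS, so both length gaps are $\Oh{d}$), applies Lemma~\ref{lem:subsequence} to reduce select on $C$ to select on $S_1$, and then Lemma~\ref{lem:supersequence} to reduce select on $S_2$ to select on $C$. Your observation about building a standalone select structure for $D$ is exactly what Lemma~\ref{lem:supersequence}'s proof already provides within its $\Oh{|S_2|-|S_1|}$-word budget, so no extra care is needed there.
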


\begin{proof}
Consider a sequence of $d$ single-character insertions, deletions and substitutions that turns $S_1$ into $S_2$.  Let $C$ be the common subsequence of $S_1$ and $S_2$ consisting of characters left unchanged by these $d$ edits (or a longer common subsequence if we can find one).  By Lemma~\ref{lem:subsequence}, we can store $\Oh{d}$ extra words and support any select query on $C$ using $\Oh{\log \log |S_1|}$ time on top of a select query on $S_1$.  By Lemma~\ref{lem:supersequence}, we can then store $\Oh{d}$ extra words and support any select query on $S_2$ using $\Oh{\log \log |S_2|}$ time on top of a select query on $C$.  Therefore, we can store $\Oh{d}$ extra words on top of the select data structure for $S_1$ and support any select query on $S_2$ using $\Oh{\log \log (|S_1| + |S_2|)}$ time on top of a select query on $S_1$.
\qed
\end{proof}

For example, consider the strings \(S_1 = \mathsf{TCTGCGTAAAAGGTGC}\) and \(S_2 = \mathsf{TGCTCGTAAAACGCG}\) (the BWTs of {\sf GCACTTAGAGGTCAGT} and {\sf GCACTAGACGTCAGT}, respectively, from the running example in Belazzougui et al.'s paper).  Their edit distance is 5 and their longest common subsequence is \(C = \mathsf{TCTCGTAAAAGG}\).  If we already have a select data structure for $S_1$ and we want one for $S_2$, we first add support for relative select on $C$ by the bitvectors \(B, B_{\sf A}, \ldots, B_{\sf T}\), shown below; then we add support for relative select on $S_2$ by storing bitvectors \(B', B_{\sf A}', \ldots, B_{\sf T}'\), also shown below, and a select data structure for \(D = \mathsf{GCC}\).  We note that if we have a relative FM-index for $S_2$ with respect to $S_1$, then it already includes $B$, $B'$ and $D$.

\begin{center}
$\begin{array}{lcl@{\hspace{8ex}}rcl}
B [1..16] & = & 0001000000010101 & B' [1..15] & = & 010000000001010\\[1ex]
B_{\sf A} [1..4] & = & 0000 & B_{\sf A}' [1..4] & = & 0000\\
B_{\sf C} [1..3] & = & 001 & B_{\sf C}' [1..4] & = & 0011\\
B_{\sf G} [1..5] & = & 10100 & B_{\sf G}' [1..4] & = & 1000\\
B_{\sf T} [1..4]  & = & 0001 & B_{\sf T}' [1..3] & = & 000
\end{array}$
\end{center}

To compute \(S_2.\select_{\sf C} (4)\), for instance, we check \(B_{\sf C}' [4]\) and see it is 1, meaning the fourth {\sf C} in $S_2$ does not appear in $C$.  Since \(B_{\sf C}'.\rank_1 (4) = 2\), it is the second {\sf C} in $D$.  Since \(D.\select_{\sf C} (2) = 3\), it is the third character in $D$.  Finally, since \(B_1'.\select_1 (3) = 14\), it is the 14th character in $S_2$, meaning \(S_2.\select_{\sf C} (4) = 14\).

To compute \(S_2.\select_{\sf G} (3)\), we check \(B_{\sf G}' [3]\) and see it is 0, meaning the third {\sf G} in $S_2$ also appears in $C$.  Since \(B_{\sf G}'.\rank_0 (3) = 2\), it is the second {\sf G} in $C$.  Since
\[C.\select_{\sf G} (2) = B.\rank_0 (S_1.\select_{\sf G} (B_{\sf G}.\select_0 (2))) = 11\,,\]
it is the 11th character in $C$.  Finally, since \(B_1'.\select_0 (11) = 13\), it is the 13th character in $S_2$, meaning \(S_2.\select_{\sf G} (3) = 13\).

\section{Experiments}
\label{sec:experiments}

We augmented the existing implementation of the Relative FM-index with our new select structure. The implementation is written in C++ and based on the Succinct Data Structures Library 2.0~\cite{GBMP14}. We used g++ version 4.8.1 to compile the code. Our experiments were run in a computer cluster with two 16-core AMD Opteron 6378 processors in each node. The nodes were running Linux kernel 2.6.32. Query tests were run on a single core in a dedicated node with no other load.

As our reference sequence, we chose the 1000 Genomes Project's version of the GRCh37 human reference genome, both with (3.096 Gbp) and without (3.036 Gbp) chromosome Y. For a target sequence, we chose the maternal haplotype of the 1000 Genomes Project's individual NA12878 (3.036 Gbp)~\cite{Rozowsky2011}. We built a plain FM-index for the reference sequences and the target sequence, as well as relative FM-indexes for the target sequence relative to both references and with and without structures for relative select; the lengths of the common subsequences used were 2.992 Gbp and 2.991 Gbp, respectively. In all cases, we used plain bitvectors in the wavelet trees and entropy-compressed bitvectors~\cite{RRR07} for marking the common subsequences.

To test the performance of relative select, we ran 100 million random $\Psiop(i) = \BWT.\select_{c}(i - \C[c])$ queries on the BWT of the target sequence, using a plain FM-index and Relative FM-indexes with and without relative select. (Character $c$ is the $i$th character in the BWT in sorted order, while $\C[c]$ is the number of occurrences of characters smaller than $c$ in the BWT.) The implementation of $\Psiop$ in the Relative FM-index without relative select was based on binary searching with rank queries. As a comparison, we also ran $\LF(i) = \C[\BWT[i]] + \BWT.\rank_{\BWT[i]}(i)$ queries. Table~\ref{table:queries} shows the results: the relative FM-indexes without relative select are each about a fifth the size of the normal FM-indexes but rank queries are about seven times slower and select queries are about forty times slower; the relative FM-indexes with relative select are about a third the size of the normal FM-indexes but select queries are only about five times slower (rank queries are unaffected).

\begin{table}[tb!]
\caption{Average query times for 100 million random $\LF$ and $\Psiop$ queries on NA12878 stored relative to the human reference genome, with and without chromosome Y.}
\label{table:queries}
\setlength{\tabcolsep}{3pt}
\resizebox{\textwidth}{!}
{\begin{tabular}{ccccccccc}
\hline
\noalign{\smallskip}
 & \multicolumn{3}{c}{\textbf{FM-index}} & \multicolumn{3}{c}{\textbf{Relative FM-index}} & \multicolumn{2}{c}{\textbf{$+$\,Relative Select}} \\
\textbf{ChrY} & \textbf{space} & \textbf{$\LF$} & \textbf{$\Psiop$} & \textbf{space} & \textbf{$\LF$} & \textbf{$\Psiop$} & \textbf{total space} & \textbf{$\Psiop$} \\
\noalign{\smallskip}
\hline
\noalign{\smallskip}
yes & 1090 MB & \mus{0.55} & \mus{1.22} & 218 MB & \mus{3.95} & \mus{48.0} & 382 MB & \mus{6.11} \\
no  & 1090 MB & \mus{0.55} & \mus{1.11} & 181 MB & \mus{3.84} &  \mus{44.8} & 331 MB & \mus{6.12} \\
\noalign{\smallskip}
\hline
\end{tabular}}
\end{table}

\bibliographystyle{splncs03}
\bibliography{select}

\appendix

\section{de Bruijn Graphs}
\label{sec:graphs}

In biology, the (edge-centric) {\em $k$th-order de Bruijn graph} for a set of strings (e.g., DNA reads) is the graph whose nodes are those strings' $k$-mers (substrings of length $k$), with a directed edge \((u, v)\) from $u$ to $v$ if at least one of the strings contains a corresponding substring of length \(k + 1\) with $u$ as a prefix and $v$ as a suffix.  We label \((u, v)\) with the last character of $v$.  Almost all state-of-the-art DNA assemblers build contigs via Eulerian assembly~\cite{IW95,PTW01} on de Bruijn graphs, making their space- and time-efficient representation an important problem in bioinformatics.

Bowe et al.\ add certain dummy nodes and edges, sort the edges into the right-to-left lexicographic order of the nodes they leave, and take the last column of the matrix whose rows are the edges in sorted order (or, equivalently, take the last character in each edge).  The result is like a BWT in which edges correspond to characters and nodes correspond to the substrings containing all their out-edges' characters.  For example, for the string {\sf TACGTCGACGACT} and \(k = 3\), Bowe et al.\ add nodes {\sf \$\$\$}, {\sf \$\$T} and {\sf \$TA} and edges {\sf \$\$\$T}, {\sf \$\$TA} and {\sf \$TAC} to obtain the graph shown on the right side of Figure~\ref{fig:graph}; build the matrix shown on the left side of the figure; and take the last column {\sf TCCGTGGATAA\$C}.  (This example is from~\cite{BBGPS15}.)  With some auxiliary data structures, we can use rank and select queries on this edge-BWT to navigate forward and backward in the graph.

\begin{figure}[t!]
\resizebox{\textwidth}{!}
{\begin{tabular}{c@{\hspace{6ex}}c}
\raisebox{-10ex}
{\includegraphics*[trim = 0cm 0cm 9cm 24cm, width=50ex]{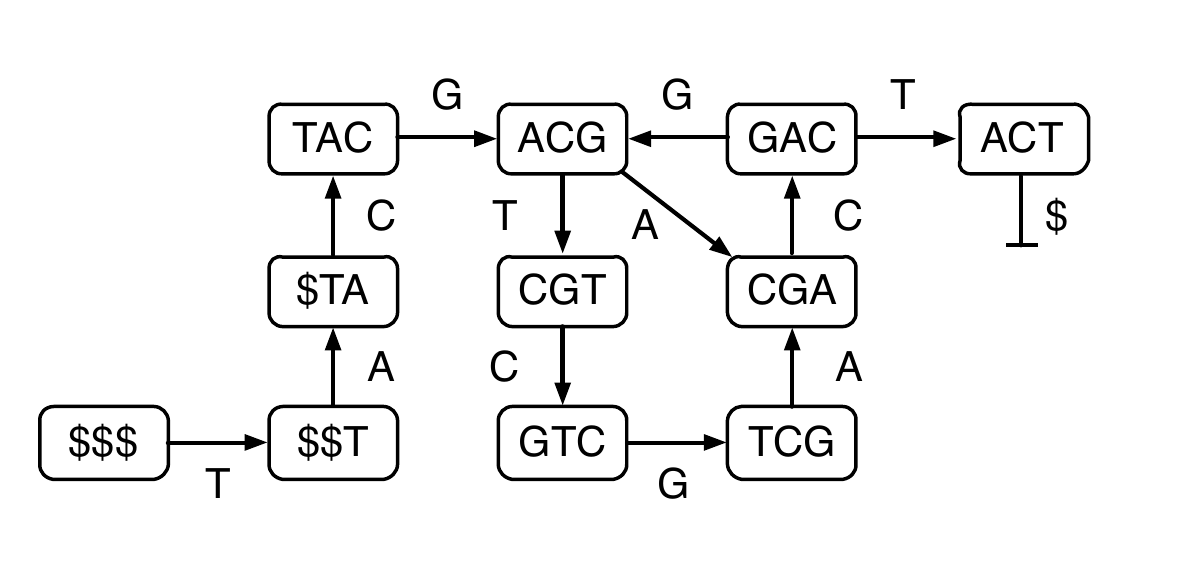}} &
\begin{tabular}{r@{\hspace{1ex}}@{\hspace{1ex}}@{\hspace{1ex}}l@{\hspace{1ex}}c
	@{\hspace{4ex}}r@{\hspace{1ex}}@{\hspace{1ex}}@{\hspace{1ex}}l@{\hspace{1ex}}c}
1) & \sf \,\$\,\$\,\$\, & \sf T & 8) & \sf ACG & \sf A\\
2) & \sf CGA & \sf C & 9) & \sf ACG & \sf T\\
3) & \sf \,\$\,TA & \sf C & 10) & \sf TCG & \sf A\\
4) & \sf GAC & \sf G & 11) & \sf \,\$\,\$\,T & \sf A\\
5) & \sf GAC & \sf T & 12) & \sf ACT & \sf \$\\
6) & \sf TAC & \sf G & 13) & \sf CGT & \sf C\\
7) & \sf GTC & \sf G & & &
\end{tabular}
\end{tabular}}
\caption{Bowe et al.'s augmented de Bruijn graph (left) and matrix (right) for the string {\sf TACGTCGACGACT}; the last column {\sf TCCGTGGATAA\$C} is like a BWT of the edges.}
\label{fig:graph}
\end{figure}

For the two strings {\sf TACGTCGACGACT} and {\sf TACGACGCGACT} and \(k = 3\), the de Bruijn graph is 2 nodes larger than the graphs for strings separately.  If we store whether each edge occurs in the first string, the second string, or both, then the result is a {\em coloured de Bruijn graph}.  Coloured de Bruijn graphs were introduced by Iqbal et al.~\cite{ICTFM12} for detecting variations between individuals' genomes, and are now also used in other areas of genomics (see, e.g.,~\cite{BP15}).  We can view the coloured de Bruijn graph as the union of each graph consisting of edges of the same colour.  In a future paper we will show how to combine the BOSS representations of the individual de Bruijn graphs to obtain a representation of the coloured de Bruijn graph, and also how to relatively compress the auxiliary data structures for the BOSS representations of the individual graphs.

We can use Belazzougui et al.'s result to relatively compress the edge-BWTs of the individual graphs while still supporting rank over them.  For example, the edge-BWTs for {\sf TACGTCGACGACT} and {\sf TACGACGCGACT} with \(k = 3\) are {\sf TCCGTGGATAA\$C} and {\sf TCCGTGGACAA\$}, respectively.  They are so close --- edit distance 2 --- because most of the strings' 4-tuples are common to both and, thus, most of their de Bruijn graphs' edges are common to both.  We note that, for reasonable values of $k$, most of the \((k + 1)\)-mers in genomes from the same species should also be common to most of the genomes.  In this paper we showed how to support relative select on similar strings, which we will eventually need to navigate backward across edges in our representation of coloured de Bruijn graphs.

\end{document}